\theoremstyle{plain}
\newtheorem{thm}{Theorem}[section]
\newtheorem{cor}[thm]{Corollary}
\theoremstyle{definition}
\newtheorem{ex}[thm]{Example}
\theoremstyle{remark}
\newcommand{\PP}{\mathbb P}
\newcommand{\EE}{\mathbb E}
\newcommand{\tc}{\text{:}} 
\begin{document}

\title{Species tree inference from gene splits by Unrooted STAR methods}

\author{Elizabeth~S.~Allman, James~H.~Degnan, and John~A. ~Rhodes
\IEEEcompsocitemizethanks{
\IEEEcompsocthanksitem E.S. Allman and J.A. Rhodes are with the Department
of Mathematics and Statistics, University of Alaska Fairbanks, Fairbanks, AK 99775.\protect\\
E-mail: e.allman@alaska.edu, j.rhodes@alaska.edu
\IEEEcompsocthanksitem J.H. Degnan is with the Department
of Mathematics and Statistics, The University of New Mexico, Albuquerque, NM  87131
E-mail: jamdeg@unm.edu 
}
}


\IEEEtitleabstractindextext{%
\begin{abstract}
The  $\text{NJ}_{st}$ method was proposed by Liu and Yu to infer a species tree topology from unrooted topological gene trees. While its statistical consistency under the  multispecies coalescent model was established only for a 4-taxon tree, simulations demonstrated its good performance on gene trees inferred from sequences for many taxa. Here we prove the statistical consistency of the method for an arbitrarily large species tree. Our approach connects $\text{NJ}_{st}$ to a generalization of the STAR method of Liu, Pearl and Edwards, and a previous theoretical analysis of it. We further show $\text{NJ}_{st}$
utilizes only the distribution of splits in the gene trees, and not their individual topologies. Finally, we discuss how multiple samples per taxon per gene should be handled for  statistical consistency.
\end{abstract} 
 \begin{IEEEkeywords}
 coalescent model, STAR algorithm, $\text{NJ}_{st}$, species tree.
 \end{IEEEkeywords}}

\maketitle

\IEEEraisesectionheading{\section{Introduction}}
\IEEEPARstart{W}{ith} the growing feasibility of building large multi-locus data sets of genetic sequences, questions of how to best infer ancestral relationships between taxa have increasingly been viewed in the light of the multispecies coalescent model. This model describes the formation of \emph{gene trees} (or genealogies) relating orthologous loci within a \emph{species tree} composed of populations. It thus brings into phylogenetics an important model of population genetics, in order to capture the phenomenon of incomplete lineage sorting. While the coalescent framework still omits the possibility of non-tree-like relationships due to hybridization or lateral gene transfer, it allows incongruence across gene trees to be used to more accurately infer species trees in situations where incomplete lineage sorting is the dominant cause.

In principle it is straightforward to combine standard models of sequence evolution with the multispecies coalescent for inference of species trees under either a maximum likelihood or Bayesian framework. In practice, though, this is both computationally intensive and requires some additional assumptions --- most importantly, a means of relating the coalescent and mutation time scales --- that may or may not be reasonable. Such assumptions are not always highlighted in data analysis, even though they may include 1) a molecular clock operating for each gene tree, 2) constant population sizes over each branch of the species tree, and 3) a common mutation rate across gene trees, or 
variants of these. It is also not clear to what extent existing software implementations have been applied to simulated data violating such assumptions, in order to understand their robustness. Finally, even accepting these assumptions, analyzing a many-gene many-taxon data set can be computationally infeasible using a standard approach.

Some inference approaches simplify the problem by first inferring individual gene trees by established phylogenetic methods, and then using these to infer a species tree. From the gene trees, one might use metric information, or only topologies, with or without a root. If one views the gene tree topologies as more robustly inferable than metric edge lengths, then two methods, STAR \cite{Liu2009}, and $\text{NJ}_{st}$ \cite{Liu2011}, are especially attractive. By not using any metric information from the gene trees, they elegantly circumvent issues of how one should relate the coalescent and mutational time scales.
They both encode gene tree topologies through special distance matrices, in what one might call a remetrization step, with STAR requiring rooted trees, and $\text{NJ}_{st}$ unrooted ones. The average of these matrices is then used as input to a standard metric tree-building algorithm to recover the species tree topology. (Though one obtains edges lengths as well as part of the process, their relationship to the true lengths on the species tree is not currently known.) All computations are both simple and fast, and accuracy on large datasets is competitive with the best current methods, as shown by the recent implementation and extension of $\text{NJ}_{st}$ in the software ASTRID \cite{ASTRIDWarnow}.

In \cite{Liu2009} and  \cite{Liu2011} arguments were given establishing the statistical consistency of STAR and $\text{NJ}_{st}$ for 4-taxon trees. In \cite{adr2013} a  rigorous proof of consistency was given for STAR and variants of it on arbitrarily large trees, along with a theoretical exploration of how the algorithm actually only required the distribution of clades on the gene trees. This recasts STAR as a clade consensus method attuned specifically to the multispecies coalescent model.

Here we obtain similar theoretical results for $\text{NJ}_{st}$. We first prove its statistical consistency under the multispecies coalescent model on arbitrary trees in Theorem \ref{thm:main}. Our proof is built on relating $\text{NJ}_{st}$ to a generalized STAR method as introduced in \cite{adr2013}, and deducing our results from those on STAR.
In Theorem \ref{thm:splitUSTAR} we show the method uses only information in the distribution of splits on gene trees, and not the more detailed information of the gene trees themselves. Thus we view it as a split consensus method designed specifically for inference of species trees under the multispecies coalescent model. 
In Section \ref{sec:multi} we then discuss how one might apply the method to data that involves multiple samples from each taxon. We show the approach suggested by \cite{Liu2011} for such data can be problematic even in a simple case, but then give an alternative which is statistically consistent under certain sampling schemes. 

Finally, we suggest a rechristening of $\text{NJ}_{st}$ as USTAR/NJ, for ``Unrooted STAR with Neighbor Joining.'' This both emphasizes the close relationship of the two methods, and emphasizes that one might perform the method with tree selection algorithms other than Neighbor Joining. Any statistically consistent method for selecting a metric tree from possibly non-ultrametric distance tables could be used in its place. For instance, USTAR/BIONJ \cite{BIONJ} uses a different purely algorithmic tree building method,  while USTAR/FastME \cite{FastME} performs a hueristic search to optimize the balanced minimum evolution criterion to select a tree.  
Indeed, the ASTRID software already allows one to apply such methods and \cite{ASTRIDWarnow} compares their performance.

\section{Notation and Terminology}
Let $\mathcal X$ be a finite set of $n$ taxa, which we denote by lower case letters $a,b,c,\dots$. For any specific gene, we denote a single sample from each taxon by the corresponding upper case letter $A,B,C,\dots$, with $\mathcal X_g$ the full set of such gene samples. If $\mathcal A\subseteq \mathcal X$ is a subset of taxa, the corresponding subset of genes is $\mathcal A_g\subseteq \mathcal X_g$. For example $\{a,b,c\}_g=\{A,B,C\}$.

By a species tree $\sigma=(\psi,\lambda)$ on $\mathcal X$ we mean a rooted topological tree with leaves bijectively labelled by $\mathcal X$, together with an assignment of edge weights $\lambda$ to its internal edges. These edge weights are specified in coalescent units, so that the multispecies coalescent model on $\sigma$ leads to a probability distribution on gene trees with leaves labelled by $\mathcal X_g$. (For a more precise definition of the multispecies coalescent as we use it, we direct the reader to \cite{adr2011a}.) The gene trees here are metric rooted
trees, though this probability distribution, by marginalization, also leads to ones on metric or topological, rooted or unrooted, gene trees. We denote rooted topological gene trees by $T^r$ and unrooted topological gene trees by $T$. The probability of an unrooted topological gene tree $T$ under the multipspecies coalescent on $\sigma$ is denoted $\PP_\sigma(T)$.

A metric tree is called \emph{binary} if the underlying topological tree is binary and all internal edge lengths are positive.
\smallskip

A \emph{split} of a set of taxa $\mathcal X$ is a bipartition $\mathcal A |\mathcal B$ of $\mathcal X$ in which neither $\mathcal A$ nor $\mathcal B$ is empty. Note  $\mathcal A |\mathcal B$  is the same split as  $\mathcal B |\mathcal A$. 
If $\sigma=(\psi,\lambda)$  is a species tree on $\mathcal X$ then
a \emph{split on $\sigma$} is a split of $\mathcal X$
formed by deleting a single edge of $\psi$ and grouping taxa according to the connected components of the resulting graph. 
We similarly define splits of $\mathcal X_g$, and splits of $\mathcal X_g$ on  a specific gene tree.

\section{USTAR methods}
Given an unrooted topological gene tree $T$ on $\mathcal X_g$, we may  metrize it by giving all edges length 1. The distance $D_T(A,B)$ between any two gene samples $A,B$ on $T$ is then the number of edges in the path connecting them, \emph{i.e.}, the graph-theoretic distance. Fixing an ordering of the taxa, it is convenient to think of $D_T$ as an $n\times n$ matrix. In essence, we have simply encoded the topology of $T$ by the numerical matrix $D_T$.

In \cite{Liu2011}, the internodal distance, i.e., the number of nodes on the path in the tree between two taxa, is used to define a similar distance table. The graph-theoretic distance between taxa is always one more than the internodal distance, and it is straightforward to check that this difference between them has no essential impact on anything we do in this paper. We use the graph-theoretic distance here for its simple interpretation in terms of assigning edge lengths of 1, and its more direct connection to the notion of splits on the tree.

For a probability distribution $\mu$ on unrooted gene trees on $\mathcal X_g$, the expected value 
$$D:=\EE_\mu(D_T)=\sum_T \mu(T) D_T$$
defines a dissimilarity function on $\mathcal X_g$. Identifying $\mathcal X$ with $\mathcal X_g$, we call this the \emph{USTAR 
dissimilarity on $\mathcal X$} with respect to $\mu$.  For an empirically-obtained collection of gene trees, this dissimilarity is just the mean of the matrices $D_T$ for trees in the sample.

In this paper, we focus on the particular choice $\mu=\PP_\sigma$, i.e., we use the probability of unrooted gene trees arising from the multispecies coalescent on a specific species tree $\sigma$, or an empirical distribution describing a sample from this theoretical one.

From the USTAR dissimilarity $D$ obtained from a gene tree distribution, one can construct or choose a tree on $\mathcal X$, using any of a variety of well-known methods --- \emph{e.g.}, UPGMA, Neighbor Joining, BIONJ, Balanced Minimum Evolution, etc. Discarding any edge lengths that might have been produced in the course of applying the tree selection method, yields a topological tree on $\mathcal X$. Thus we have a family of methods whose input is a theoretical or empirical distribution of unrooted topological gene trees, and output is a single unrooted topological tree on the taxa.
In particular, USTAR/NJ is the method obtained when Neighbor Joining is used, and coincides with $\text{NJ}_{st}$. The output of such a method can be viewed as an inferred species tree, but the validity of this view hinges on the question of whether the method is statistically consistent.

\medskip

USTAR methods can be helpfully viewed as related to generalized STAR methods developed in \cite{adr2013}, building on \cite{Liu2009}. STAR methods of inferring a species tree from \emph{rooted} gene trees similarly involve metrizing the gene trees and averaging the resulting pairwise distance matrices over a gene tree distribution. However the metrization is done as follows: For $n$ taxa, first choose a non-increasing sequence of node numbers $a_0\ge a_1\ge a_2\ge \dots\ge a_{n-2}\ge 0$, with at least one of these inequalities strict. Assign $a_0$ to the root, $a_1$ to its non-leaf children, $a_2$ to their non-leaf children, etc. Then interpret the assigned numbers as distances from the leaves in an ultrametric tree.

For the particular case of node numbers $n-3/2,n-2,n-3,n-4,\dots$, the generalized STAR metrization has the effect of giving length 1 to all internal edges of the rooted gene tree, except those incident to the root. However, if suppressing the root leads to a new internal edge in the unrooted version, the total length of that edge is 1. Thus after suppressing the root, all internal edges of the gene tree are given the same length as they would be by USTAR. However, lengths of pendant edges are different, as they are all 1 under USTAR and they vary to achieve ultrametricity under STAR.

\begin{ex} 

\begin{figure}
\includegraphics[width=9cm]{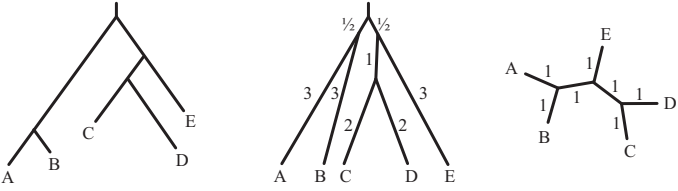}
\caption{ A gene tree ((A,B),((C,D),E)) (left) and its metrizations for the generalized STAR method discussed in the text (center), and USTAR (right).}\label{fig:example}
\end{figure}

Consider the 5-taxon gene tree $T^r=((A,B),((C,D),E))$ shown in Figure \ref{fig:example}. Viewing $T$ as an unrooted tree, with taxa ordered alphabetically, USTAR leads to the distance matrix
$$D_T=\begin{pmatrix} 
0&2&4&4&3\\2&0&4&4&3\\4&4&0&2&3\\4&4&2&0&3\\3&3&3&3&0\end{pmatrix}.$$
Separating the contributions from internal and pendant edges, we can write
\begin{align*}
D_T&=D_T^{ui} +D^{up}_T\\
& =
\begin{pmatrix} 
0&0&2&2&1\\0&0&2&2&1\\2&2&0&0&1\\2&2&0&0&1\\1&1&1&1&0
\end{pmatrix}+\begin{pmatrix} 
0&2&2&2&2\\2&0&2&2&2\\2&2&0&2&2\\2&2&2&0&2\\2&2&2&2&0
\end{pmatrix}.
\end{align*}
 Here `$ui$' and `$up$' refer to the `unrooted internal' end `unrooted pendant' edge contributions.

Viewing $T$ as a rooted tree, $T^r$, STAR with the node numbering given above, leads to the distance matrix
$$D^r_{T^r}=\begin{pmatrix} 
0&6&7&7&7\\6&0&7&7&7\\7&7&0&4&6\\7&7&4&0&6\\7&7&6&6&0
\end{pmatrix}.$$
Again separating the contributions from internal and pendant edges of the unrooted tree, we  have
\begin{align*}
D^r_{T^r}&=D_T^{ui} +D^{rp}_{T^r}\\
&= \begin{pmatrix} 
0&0&2&2&1\\0&0&2&2&1\\2&2&0&0&1\\2&2&0&0&1\\1&1&1&1&0
\end{pmatrix}+\begin{pmatrix} 
0&6&5&5&6\\6&0&5&5&6\\5&5&0&4&5\\5&5&4&0&5\\6&6&5&5&0
\end{pmatrix},
\end{align*}
where `$rp$' refers to the `rooted pendant' edge contributions. 
For a general tree, the rooted pendant edge contributions may include some that arise from an internal edge incident to the root that becomes part of a pendant edge when the root is suppressed (such as when there is a single outgroup on the tree).

Note that the same contributions appears from the internal edges of the unrooted tree in both the USTAR and STAR distance matrices.
Our analysis of USTAR in the proof of Theorem \ref{thm:Udist} below will be based in the fact that, for the particular STAR numbering scheme where branches incident to the root have length 1/2 and all other internal branches have length 1, the distance matrices differ only in contributions from pendant edges.
\end{ex}

\section{Statistical consistency}

Our goal in this section is to prove the following:

\begin{thm} \label{thm:main}Let M denote any method of obtaining an unrooted topological tree from a dissimilarity function satisfying
\begin{enumerate}
\item $M$ applied to a tree metric returns the unique tree fitting it, and\label{cond:fit}
\item $M$ is continuous at tree metrics arising from binary trees. \label{cond:cont}
\end{enumerate}
Let
$\sigma=(\psi,\lambda)$ be a binary species tree on $\mathcal X$. Then USTAR/M is a statistically consistent method of inference of the unrooted topology of $\psi$ from gene trees under the multispecies coalescent model on $\sigma$. \end{thm}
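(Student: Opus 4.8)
The plan is to reduce the theorem to two essentially independent ingredients: first, that the population-level USTAR dissimilarity $D=\EE_{\PP_\sigma}(D_T)$ is a binary tree metric fitting the unrooted topology of $\psi$; and second, a standard large-sample argument combining the law of large numbers with the continuity hypothesis on $M$. The first ingredient is exactly the content of Theorem~\ref{thm:Udist}, and it is where essentially all the work lies; the second is soft once the first is in hand.

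For the first ingredient I would exploit the relationship between USTAR and generalized STAR highlighted in the Example. For each unrooted gene tree $T$ (with rooting $T^r$) one has the decompositions $D_T=D_T^{ui}+D_T^{up}$ and $D^r_{T^r}=D_T^{ui}+D^{rp}_{T^r}$, where the \emph{internal} contribution $D_T^{ui}$ --- the number of internal edges on each leaf-to-leaf path, all of length $1$ --- is common to both. The crucial structural observations are that (i) both pendant contributions are \emph{additive leaf forms}, i.e.\ of the shape $(A,B)\mapsto f(A)+f(B)$, since each is a sum of the two relevant pendant edge lengths, so the same holds for their expectations $\EE(D_T^{up})$ (in fact constantly $2$ off the diagonal) and $\EE(D^{rp}_{T^r})$; and (ii) taking expectations over $\PP_\sigma$ preserves the common internal term. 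Consequently
\[
D=\EE(D^r_{T^r})-\EE(D^{rp}_{T^r})+\EE(D^{up}_T),
\]
so $D$ differs from the population STAR dissimilarity $\EE(D^r_{T^r})$ only by an additive leaf form. By the analysis of generalized STAR in \cite{adr2013}, $\EE(D^r_{T^r})$ is a tree metric fitting the binary species tree $\sigma$, and in particular, after suppressing the root, a binary tree metric realizing the unrooted topology of $\psi$ with strictly positive internal edge lengths. Adding an additive leaf form to a tree metric alters only its pendant edge lengths, leaving the four-point condition, the internal edge lengths, and hence the unrooted topology intact; so $D$ is itself a binary tree metric fitting the unrooted topology of $\psi$.

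With Theorem~\ref{thm:Udist} established, consistency follows quickly. Under the multispecies coalescent the sampled gene trees are i.i.d.\ draws from $\PP_\sigma$, and $D_T$ takes only finitely many (bounded) values, so by the strong law of large numbers the empirical dissimilarity $\widehat D_N=\tfrac1N\sum_i D_{T_i}$ converges almost surely to $D$ as the number of genes $N\to\infty$. Condition~\ref{cond:fit} guarantees $M(D)$ is the unique tree fitting $D$, namely the unrooted topology of $\psi$. Since $D$ arises from a binary tree, Condition~\ref{cond:cont} makes $M$ continuous there, so $M(\widehat D_N)=M(D)$ for all $\widehat D_N$ in a neighborhood of $D$; combined with the almost-sure convergence this shows $M(\widehat D_N)$ equals the unrooted topology of $\psi$ with probability tending to $1$, which is the asserted statistical consistency.

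I expect the main obstacle to be the careful justification of the common-internal-term decomposition in full generality, rather than merely for the illustrative $5$-taxon example. The delicate point is the behavior under suppression of the root: a rooted binary gene tree's two root-incident edges (each of length $1/2$ in the chosen STAR numbering) merge into a single unrooted edge of length $1$, which is internal exactly when both root children are internal but becomes part of a pendant edge in the outgroup case flagged in the Example. One must verify that, after this identification, every genuinely internal edge of the unrooted gene tree carries length $1$ under both metrizations --- so that $D_T^{ui}$ is unambiguous and shared --- while every remaining discrepancy is absorbed into a pendant, additive-leaf-form term. Once this bookkeeping is pinned down, the reduction to the STAR result of \cite{adr2013} is clean.
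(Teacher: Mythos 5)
Your proposal follows essentially the same route as the paper: the same decomposition $D_T=D^{ui}_T+D^{up}_T$ versus $D^r_{T^r}=D^{ui}_T+D^{rp}_{T^r}$ under the STAR node numbering $n-3/2,\,n-2,\,n-3,\dots$ (including the root-suppression bookkeeping you flag), reduction to Theorem 3.2 of \cite{adr2013} via the additive-leaf-form difference, and the standard law-of-large-numbers-plus-continuity finish. The only point you gloss --- an additive leaf form can in principle violate the degenerate three-taxon instances of the four-point condition, i.e., the triangle inequality, so ``$D$ is itself a tree metric'' needs one more line --- is handled in the paper by observing that $D$, being an average of the tree metrics $D_T$, automatically satisfies those inequalities.
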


Informally, the continuity required of the method $M$ in condition \eqref{cond:cont}  means that when $M$ is applied to a sufficiently small perturbation of a binary tree metric, it returns the correct tree topology, and edge lengths close to those underlying the tree metric.  As NJ is known to satisfy conditions \eqref{cond:fit}  and \eqref{cond:cont}, we see that in particular USTAR/NJ is consistent. Since UPGMA does not, in general, satisfy condition \eqref{cond:fit} for non-ultrametric trees, the theorem does not apply to USTAR/UPGMA. 

Theorem \ref{thm:main} is a consequence of the following.
\begin{thm} \label{thm:Udist}
The USTAR dissimilarity on $\mathcal X$ with respect to the probability distribution on unrooted topological gene trees arising from multispecies coalescent model on $\sigma=(\psi,\lambda)$, $D=\EE_\sigma(D_T)$, exactly fits the unrooted species tree topology of $\psi$.
\end{thm}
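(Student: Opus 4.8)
The plan is to leverage the decomposition highlighted in the Example together with the known consistency of STAR from \cite{adr2013}. For a single gene tree on $\mathcal X_g$ I would write $D_T = D^{ui}_T + D^{up}_T$ and $D^r_{T^r} = D^{ui}_T + D^{rp}_{T^r}$, where the internal contribution $D^{ui}_T$ is literally identical in the two metrizations because the chosen STAR numbering gives every unrooted internal edge length exactly $1$ (the two length-$1/2$ edges incident to the root merging to length $1$ when the root is suppressed). The first task is to record this shared-internal identity for an arbitrary gene tree, not merely the $5$-taxon example.

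Next I would identify both pendant contributions as per-leaf additive dissimilarities. For USTAR this is immediate: every pendant edge has length $1$, so $D^{up}_T(A,B) = 2$ for all $A \neq B$, independently of $T$. For STAR, since $D^r_{T^r}$ is the tree metric on the unrooted gene tree $T$ whose internal edges all have length $1$, subtracting $D^{ui}_T$ leaves exactly the sum of the two unrooted pendant lengths, so $D^{rp}_{T^r}(A,B) = c^{T^r}_A + c^{T^r}_B$, where $c^{T^r}_A$ is the unrooted pendant length at $A$ (this absorbs, for an outgroup, the merged root-incident edge). Linearity of expectation preserves additivity, giving $\EE_\sigma(D^{up}_T)(A,B) = 2$ and $\EE_\sigma(D^{rp}_{T^r})(A,B) = \bar c_A + \bar c_B$ with $\bar c_A = \EE_\sigma(c^{T^r}_A)$. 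Hence $D$ and the STAR dissimilarity $D^r := \EE_\sigma(D^r_{T^r})$ differ only by the per-leaf additive term $D(A,B) - D^r(A,B) = (1 - \bar c_A) + (1 - \bar c_B)$.

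Now I would invoke \cite{adr2013}: $D^r$ is a tree metric exactly fitting $\psi$ with positive internal edge lengths. Because adding a per-leaf additive term to a tree metric alters only pendant edge lengths and leaves every internal edge length invariant (the internal lengths are determined by additive-shift-invariant four-point combinations), $D$ is realized on the same tree $\psi$ with the same positive internal edge lengths. Consequently the unrooted topology exactly fit by $D$ is that of $\psi$, which is the assertion, and this in turn feeds condition \eqref{cond:fit} of Theorem \ref{thm:main}.

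The delicate point, which I expect to be the main obstacle, is to confirm that $D$ is a genuine (nonnegative) tree metric rather than merely a signed realization on $\psi$: one must check that the pendant lengths produced by the additive shift are nonnegative. This reduces to showing the shared internal contribution $\EE_\sigma(D^{ui}_T)$ is itself a tree metric on $\psi$ with nonnegative pendant lengths, after which $D = \EE_\sigma(D^{ui}_T) + 2$ off the diagonal simply lengthens each pendant edge by $1$ and is manifestly a valid tree metric fitting the unrooted topology of $\psi$. Establishing that $\EE_\sigma(D^{ui}_T)$ fits $\psi$ is precisely where the STAR analysis of \cite{adr2013} must be brought to bear; granting it, the per-gene-tree additivity of Step two and the additive-shift argument of Step three complete the proof routinely.
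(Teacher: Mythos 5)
Your outline is, up to its final step, the paper's own argument: the same STAR node numbering $n-3/2,\,n-2,\,n-3,\dots$, the same decompositions $D_T=D^{ui}_T+D^{up}_T$ and $D^r_{T^r}=D^{ui}_T+D^{rp}_{T^r}$ of equations \eqref{eq:Dint}, the same observations that $D^{up}_T$ is the constant $2$ off the diagonal while $D^{rp}_{T^r}(A,B)=w_A+w_B$ is per-leaf additive, and the same conclusion via Theorem 3.2 of \cite{adr2013} that $D$ differs from the ultrametric STAR dissimilarity $D^r$ by a per-leaf additive shift, which preserves the four-point equalities and inequalities for quadruples of distinct taxa and hence the induced quartet topologies.

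The gap is exactly the point you flag and then wave through with ``granting it.'' The STAR analysis of \cite{adr2013} cannot supply what you ask of it: it concerns $D^r$, and after the additive shift it yields only a \emph{signed} realization of $\EE_\sigma(D^{ui}_T)$ on $\psi$, with positive internal lengths but no control on the signs of the pendant lengths; nothing in that theorem addresses pendant nonnegativity. The missing ingredient is elementary, and it is precisely the paper's degenerate-quadruple step: when two of the four taxa coincide, the four-point condition degenerates to the triangle inequality $D(a,c)+D(b,c)\ge D(a,b)$, which holds for $D$ because each $D_T$ is a genuine tree metric and the inequality survives taking expectations. (Equivalently, in your language: each $D^{ui}_T$ is a bona fide metric---pendant edges of length $0$, internal edges of length $1$---so every Gromov product $\tfrac12\bigl(d(a,b)+d(a,c)-d(b,c)\bigr)$ is nonnegative, averaging preserves this, and these products are exactly the pendant lengths of the realization.) With the degenerate quadruples checked, the full four-point condition holds and the standard result in \cite{SempleSteel} gives that $D$ is a tree metric fitting the same unrooted topology as $D^r$, namely that of $\psi$. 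So your reduction is correct and your instinct about where the delicacy lies is sound; but the closing argument is a one-line triangle inequality inherited from the individual gene trees, not a further appeal to STAR, and as written your proof is incomplete at that point.
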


\begin{proof}
Let $T^r$ denote a rooted gene tree topology. Consider the generalized STAR number scheme for rooted gene trees with node numbering sequence $n-3/2,n-2,n-3,n-4,\dots$. As discussed previously, when the root is suppressed on the STAR remetrized rooted gene tree $T^r$, all internal edges on the resulting unrooted tree have length 1. Using this node numbering scheme, let $D^r_{T^r}$ denote the STAR distance matrix for a remetrized rooted tree $T^r$, and $\EE_\sigma(D^r_{T^r})$ its expected value under the distribution on rooted topological gene trees arising from the coalescent.

We now relate the STAR dissimilarities $D^r=\EE_\sigma(D^r_{T^r})$ to those of USTAR, $D=\EE_\sigma(D_T)$. Since both the rooted and unrooted schemes give each internal edge length 1 in the unrooted gene tree topology we can write
\begin{equation}
D_T=D^{ui}_T+D^{up}_T, \ \ \ D^r_{T^r}=D^{ui}_T+D^{rp}_{T^r}\label{eq:Dint}
\end{equation}
where $D^{ui}_T$ contains the contributions to distances from internal edges of the unrooted tree, $D^{up}_T$ contains contributions from pendant edges of the unrooted scheme, and  $D^{rp}_{T^r}$ contains contributions from (unrooted tree) pendant edges in the rooted scheme.
Equations \eqref{eq:Dint} thus imply
\begin{equation}D_T= D^r_{T^r} +D^{up}_T- D^{rp}_{T^r}.\label{eq:Ddiff}
\end{equation}

Now the matrix $D^{up}_T$ is independent of $T$ and has a simple structure; all diagonal entries are 0, and all off-diagonal entries are $1+1=2$. 
The matrix 
$D^{rp}_T$, however,  does depend on $T^r$. While it also has 0 in every diagonal entry, the off-diagonal entry in row $x$, column $y$ is $w_x+w_y$, where $w_x, w_y$ are the lengths assigned to the pendant edges to taxa $x,y$ after the root is suppressed on the remetrized ultrametric $T^r$.

 Passing to expected values, we have  from equation \eqref{eq:Ddiff} that
\begin{equation}
D= D^r +\EE_\sigma(D^{up}_T) - \EE_\sigma( D^{rp}_{T^r}).\label{eq:ED}
\end{equation}
By Theorem 3.2 of \cite{adr2013}, $D^r$ exactly fits the topology of the rooted species tree (ultrametrically), and hence for each choice of 4 taxa, with some permutation of their labels the 4-point condition
\begin{align}D^r(a,c)+D^r(b,d)&=D^r(a,d)+D^r(b,c)\notag\\
&\ge D^r(a,b)+D^r(c,d)\label{eq:Dr4}\end{align}
holds.
Now in the case that $a,b,c,d$ are all distinct, this implies
\begin{align}D(a,c)+D(b,d)&=D(a,d)+D(b,c)\notag\\
&\ge D(a,b)+D(c,d)\label{eq:Dr4b},\end{align}
since
by equation \eqref{eq:ED}, we have only added $2-\EE(w_a+w_b+w_c+w_d)$ to the three sums in  \eqref{eq:Dr4} to obtain  \eqref{eq:Dr4b}.

If at most 3 of the taxa in the 4-point condition are distinct this last argument is not valid. However, if, say, $c=d$, the 4-point condition we need to establish degenerates to
$$D(a,c)+D(b,c)\ge D(a,b).$$
That this holds follows from the fact the corresponding inequality
holds for every tree metric, and in particular for each USTAR remetrization $D_T$, and hence for the expected value as well.

Thus the four point condition holds for $D$ for every set of 4 taxa, and it yields the same unrooted quartet topology as does $D^r$. Thus by standard results in \cite{SempleSteel} $D$ exactly fits the same unrooted tree topology as $D^r$, which is that of the species tree.
\end{proof}

\begin{proof}[Proof of Theorem \ref{thm:main}]
As the size of a sample of gene trees from the multispecies coalescent model on $\sigma$ increases, the empirical distribution of unrooted gene tree topologies approaches the exact one with probability 1, and thus the empirical USTAR dissimilarity approaches the theoretical one, $D$.
Since Theorem  \ref{thm:Udist} and condition \eqref{cond:cont} ensures the method $M$ returns the correct tree when applied to $D$,  condition \eqref{cond:fit} then implies with probability 1 USTAR/$M$ returns the correct unrooted species tree topology as the sample size increases to infinity. \end{proof}

\section{USTAR and splits} 

Here we establish a relationship between the USTAR expected distance matrix and split probabilities, analogous to that given in \cite{adr2013} for STAR expected distances and clade probabilities.

As a consequence of this relationship, it is natural to view USTAR methods as a type of \emph{split consenus method}. Specifically, USTAR methods use only information on probabilities of splits on gene trees, and not the finer information of the gene tree topologies themselves.

The fact that USTAR uses only split frequencies, yet can produce statistically consistent inference for the coalescent model is notable, as other split methods lack this feature. For instance
greedy consensus \cite{bryant2003classification} accepts splits in order of decreasing frequency, if they are compatible with previously accepted splits. 
Greedy consensus on clades has been proven inconsistent \cite{DegnanEtAl09}, though STAR can be viewed as a consistent clade consensus method \cite{adr2013}.  The arguments in  \cite{DegnanEtAl09} can be modified to give a similar result for greedy consensus on splits, with signs of inconsistent behavior also observed in simulations \cite{mirarab2014evaluating}.  For consistency, a consensus method must be attuned to the model of tree variation, with USTAR and STAR being appropriate for the coalescent.

\medskip

Given any two leaves $A,B$ of a gene tree $T$, let $S_T^{A,B}$ denote the set of
splits of $T$ in which $A$ and $B$ are separated (\emph{i.e.}, in different bipartition sets).  
Elements of $S_T^{A,B}$ correspond to the
edges of $T$ lying on the path from the $A$ to $B$, 
so
\begin{equation} D_T(A,B) =|S_T^{A,B}|.
\label{eq:cd}
\end{equation}
This means on an individual gene tree the distances used
in USTAR are simply counts of `separating' splits, with gene samples being judged
further apart when there are more splits on $T$ which separate them.
Thus graph-theoretic distance might also be called `split separation distance.' 
\medskip

Now for any distribution $\mu$ of 
gene trees, if $\mathcal A|\mathcal B$ is a
split of $\mathcal X$, and $\PP_\mu(\mathcal A|\mathcal B)$ denotes the
probability of the event that an observed gene tree displays split $\mathcal A_g|\mathcal B_g$, then
$$\PP_\mu(\mathcal A|\mathcal B)=
 \sum_{T \text{ displaying } \mathcal A_g|\mathcal B_g} \PP_\mu (T).$$

\begin{thm}\label{thm:splitUSTAR}
For any distribution $\mu$ of gene trees, the collection of split probabilities $\{\PP_\mu( \mathcal A|\mathcal B)\}$
determines $\EE_\mu(D_T)$.
\end{thm}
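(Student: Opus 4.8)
The plan is to derive an explicit closed form for each entry of $\EE_\mu(D_T)$ as a sum of split probabilities; once such a formula is in hand, it shows immediately that the matrix is a function of the split probabilities alone. I would argue entrywise, fixing a pair of distinct taxa $a,b$ with corresponding gene samples $A,B$, and starting from
$$\EE_\mu(D_T)(A,B)=\sum_T \mu(T)\, D_T(A,B).$$

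The key step is to rewrite the summand using equation \eqref{eq:cd}, which identifies $D_T(A,B)$ with $|S_T^{A,B}|$, the number of splits displayed by $T$ that separate $A$ and $B$. I would express this cardinality as a sum of indicators ranging over \emph{all} splits $\mathcal A_g|\mathcal B_g$ of $\mathcal X_g$ separating $A$ and $B$,
$$|S_T^{A,B}|=\sum_{\mathcal A_g|\mathcal B_g \text{ separating } A,B}\mathbf{1}[\,T \text{ displays } \mathcal A_g|\mathcal B_g\,],$$
the point being that the index set of this sum no longer depends on $T$. Substituting into the expectation and interchanging the two finite sums, the inner sum $\sum_T \mu(T)\,\mathbf{1}[\,T \text{ displays } \mathcal A_g|\mathcal B_g\,]$ is precisely the definition of $\PP_\mu(\mathcal A|\mathcal B)$. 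This yields
$$\EE_\mu(D_T)(A,B)=\sum_{\mathcal A|\mathcal B \text{ separating } a,b}\PP_\mu(\mathcal A|\mathcal B),$$
in which every term on the right is a split probability. Since this holds for every entry, the collection $\{\PP_\mu(\mathcal A|\mathcal B)\}$ determines $\EE_\mu(D_T)$, proving the theorem.

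I expect no serious analytic obstacle here: all sums are finite, so the interchange of summation needs no justification beyond Fubini for finite sums, and the argument is essentially a bookkeeping identity powered by equation \eqref{eq:cd}. The only points requiring care are combinatorial rather than analytic, namely making precise that $S_T^{A,B}$ is \emph{exactly} the set of separating splits displayed by $T$ (so that the indicator sum reproduces its cardinality without over- or undercounting), and keeping the identification of $\mathcal X$ with $\mathcal X_g$ consistent so that a split $\mathcal A|\mathcal B$ of $\mathcal X$ corresponds to the split $\mathcal A_g|\mathcal B_g$ used in the definition of $\PP_\mu(\mathcal A|\mathcal B)$.
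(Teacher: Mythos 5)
Your proposal is correct and follows essentially the same route as the paper: both arguments start from equation \eqref{eq:cd}, expand $|S_T^{A,B}|$ as a sum of indicators over a $T$-independent index set of splits, and interchange the two finite sums so the inner sum collapses to $\PP_\mu(\mathcal A|\mathcal B)$. The only cosmetic difference is that the paper sums over all splits weighted by an explicit separation indicator $J_{A,B}(\mathcal A|\mathcal B)$, whereas you absorb that indicator into the index set by summing only over splits separating $a,b$; the resulting formulas are identical.
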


\begin{proof}   Define indicator functions
$$
I_{ \mathcal A|\mathcal B }(T) = \begin{cases} 1 &\text{if $T$ displays $ \mathcal A_g|\mathcal B_g$,}\\ 0 &\text{otherwise,} \end{cases}
$$
and
$$
J_{A,B} ( \mathcal A|\mathcal B) = \begin{cases} 1 &\text{if $A,B$ separated in  $\mathcal A_g|\mathcal B_g$, }\\ 0 &\text{otherwise.} \end{cases}
$$

\medskip

Then using equation \eqref{eq:cd},
\begin{align}
\EE_\mu(D_T(A,B)) &= \sum_{T} \PP_\mu (T) \, D_T(A,B)\notag\\
&= \sum_{T} \PP_\mu (T) \,\left |S_T^{A,B}\right |\notag\\
&= \sum_{T} \PP_\mu (T) \left(\sum_{\text{splits} \atop  \mathcal A|\mathcal B }  I_{\mathcal A|\mathcal B }(T) J_{A,B} (\mathcal A|\mathcal B ) \right)\notag\\
&=  \sum_{\text{splits} \atop  \mathcal A|\mathcal B } \left( \sum_{T} \PP_\mu (T) \, I_{\mathcal A|\mathcal B}(T) \right) J_{A,B} (\mathcal A|\mathcal B) \notag\\
&= \sum_{\text{splits} \atop  \mathcal A|\mathcal B } \PP_\mu (\mathcal A|\mathcal B )  J_{A,B} (\mathcal A|\mathcal B ),\label{eq:cprob}
\end{align}
so the USTAR dissimilarity is computable from split probabilities. 
\end{proof}

Of course the distribution $\mu$ we have in mind here is either the one arising from the multispecies coalsecent model, or an empirical one from a sample from that model.

From Theorems \ref{thm:splitUSTAR} and \ref{thm:Udist} we immediately obtain the following:

\begin{cor}\label{cor:SplitID}
  The unrooted species tree topology  is identifiable from split
  probabilities under the multispecies coalescent.
\end{cor}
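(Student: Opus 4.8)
The plan is to compose the two preceding theorems with the standard uniqueness of a tree topology recovered from a tree metric. Identifiability here means that the unrooted topology of $\psi$ is uniquely determined by the collection of split probabilities $\{\PP_\sigma(\mathcal A|\mathcal B)\}$; equivalently, any two binary species trees producing identical split probabilities under the coalescent must share an unrooted topology. I would therefore argue that the split probabilities determine an intermediate object --- the USTAR dissimilarity --- which in turn pins down the topology.

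Concretely, specialize $\mu=\PP_\sigma$ in Theorem \ref{thm:splitUSTAR}. That theorem provides the explicit formula \eqref{eq:cprob} expressing each entry $\EE_\sigma(D_T(A,B))$ as a sum over splits of $\PP_\sigma(\mathcal A|\mathcal B)$ weighted by the separation indicators $J_{A,B}$. Hence the split probabilities determine the USTAR dissimilarity $D=\EE_\sigma(D_T)$ completely: the map from split probabilities to $D$ is well-defined and requires no knowledge of the full gene tree distribution, only its marginal split frequencies. By Theorem \ref{thm:Udist}, this $D$ exactly fits the unrooted topology of $\psi$.

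The final step is to invoke the standard fact (from \cite{SempleSteel}, via the four-point characterization already used in the proof of Theorem \ref{thm:Udist}) that a tree metric determines its underlying tree topology uniquely. Since $D$ is a tree metric fitting the unrooted species tree, and $D$ is a function of the split probabilities alone, the unrooted species tree topology is itself a function of the split probabilities. I expect no genuine obstacle, as this is a corollary of results already in hand; the only point requiring care is the logical framing of identifiability. If two binary species trees $\sigma_1,\sigma_2$ yielded the same split probabilities, they would produce the same dissimilarity $D$ by Theorem \ref{thm:splitUSTAR}, and by Theorem \ref{thm:Udist} this single $D$ would fit both unrooted topologies; uniqueness then forces those topologies to coincide, which establishes the claimed identifiability.
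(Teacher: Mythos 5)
Your proposal is correct and matches the paper's argument exactly: the paper derives the corollary immediately by composing Theorem \ref{thm:splitUSTAR} (split probabilities determine the USTAR dissimilarity $D$) with Theorem \ref{thm:Udist} ($D$ exactly fits the unrooted species tree topology), with uniqueness of the fitted tree implicit via \cite{SempleSteel}. Your explicit two-species-trees framing of identifiability is just a careful spelling-out of what the paper leaves tacit, not a different route.
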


It is known \cite{adr2011a} that the \emph{rooted} species tree topology is identifiable from the distribution of \emph{unrooted} gene tree topologies. It is also known that the rooted species tree topology is identifiable from clade probabilities.  Thus a natural question is whether the split probabilities, the unrooted analogues of clade probabilities, can further identify the root on the species tree. Though our investigation here does not seem to shed light on this, we plan to address it in another work.

\section{USTAR with multiple samples per taxon}\label{sec:multi}

When $\text{NJ}_{st}$ was introduced in \cite{Liu2011}, a suggestion was given for how one might deal with gene trees relating multiple lineages sampled from each taxon. 
For a collection $\mathcal T$ of gene trees, if $T\in \mathcal T$ relates $m_{a}(T)$ lineages sampled from taxon $a$ and $m_{b}(T)$ lineages from taxon $b$, then intertaxon distances were defined (up to an additive constant) as an average 
\begin{equation}
D(a,b)=\frac{\displaystyle \sum_{T\in \mathcal T}\sum_{1\le i\le m_{a}(T)\atop 1\le j \le m_{b}(T)} D_T(A_{i},B_{j})}{\displaystyle \sum_{T\in \mathcal T}m_{a}(T)m_{b}(T) },\label{eq:Liumulti}
\end{equation}
where $D_T(A_{i},B_{j})$ denotes the graph theoretic distance on tree $T$ between the $i$th sample of gene $A$ and the $j$th of gene $B$.
Unfortunately, this approach is not statistically consistent. In fact, as the size of the sample  of gene trees is increased, the probability of inferring the correct species tree can approach 0. After demonstrating this, we propose a different method of handling multiple samples per taxon, one that is statistically consistent.

\medskip

To investigate the behavior of formula \eqref{eq:Liumulti}, consider the species tree 
$$((a,b),(c,d)),$$ 
with all branch lengths long enough that incomplete
lineage sorting between different taxa is vanishingly rare. Sample lineages for a large number of genes as follows: For 50\% of the genes,  sample 3 lineages in each of taxa $a,b$ and 1 lineage in each of taxa $c,d$. In the other 50\% of genes,  sample 1 lineage in taxa $a,b$ and 3 lineages in taxa $c,d$. For sufficiently long branch lengths on the species tree, the coalescent model gives that the sampled genes trees will be approximately equally of topologies $$( ( ((A,A),A), ((B,B),B) ),(C,D))$$  and  $$((A,B),( ((C,C),C), ((D,D),D) ) ),$$
with an arbitrarily small fraction of gene trees of other topologies. 
For the first of these gene tree topologies, after unrooting and assigning all edges the length 1, the different interlineage USTAR distances are
\begin{align*}
D_T(A_i,B_j)&= 6,6,6,6,5,5,5,5,4;\\
D_T(x,y)&=5,5,4, \text{for $x=A_i,B_j$, $y=C,D$};\\
D_T(C,D)&=2.
\end{align*} For the second tree, the same distances arise, but with the roles of $A,B$ interchanged with $C,D$.
Then formula \eqref{eq:Liumulti} gives intertaxon distances arbitrarily close to
\begin{align*}
D&(a,b)=D(c,d)\\
&= \frac{(.5)(6+6+6+6+5+5+5+5+4)+.5(2)}{.5(9)+.5(1)}=5\\
D&(x,y)=\frac{ .5(5+5+4)+.5(5+5+4)}{.5(3)+.5(3)}=\frac {14}3\\
&\hskip 5cm\text{ for $x=a,b$, $y=c,d$}.
\end{align*}
These intertaxon distances do not fit any unrooted topological tree, as they do not satisfy the four-point condition \cite{SempleSteel}. In fact, selection of a tree topology by applying (part of) the four-point condition requires computing
\begin{align*}
D(a,b)+D(c,d)&=5+5=10,\\
D(a,c)+D(b,d)&=\frac {14}3+\frac {14}3 =\frac{28}3 \\
D(a,d)+D(b,c)&=\frac {14}3+\frac {14}3 =\frac{28}3
\end{align*} and choosing the smallest to determine the cherries of the tree. Here the smallest is a tie, yielding the  two incorrect topologies, $((a,c),(b,d))$ and $((a,d),(b,c))$. Neighbor Joining, which is built upon this selection criterion, would choose either of the incorrect topologies with equal probability, and then go onto compute positive lengths for the edges, obtaining either of the unrooted metric trees $((a\tc 2.333,c\tc 2.333)\tc 0.167,b\tc 2.333, d\tc 2.333)$ or $((a\tc 2.333,d\tc 2.333)\tc 0.167,b\tc 2.333, c\tc 2.333).$ 

Finite length edges on the species tree will only produce intertaxon distances arbitrarily close to those in the calculations above, with probability approaching 1 as the number of gene trees increases. However, continuity of the Neighbor Joining algorithm at these distances implies that the output of Neighbor Joining will be the wrong topology with probability approaching 1.

\medskip

 A different approach to averaging than the one used in formula \eqref{eq:Liumulti} can however lead to statistically consistent inference of the species tree.

First, suppose multiple samples are drawn from taxa in exactly the same number for each gene. That is, there are integers $m_x\ge1$ so that each gene tree has $m_x$ leaves  $X_1, X_2,\dots ,X_{m_x}$ for each $x\in \mathcal X$, for a total of 
$\sum_{x\in \mathcal X} m_x$ leaves. We will refer to a specific choice of the  numbers $(m_x)_{x\in \mathcal X}$ as a \emph{multisample scheme}.

For a single fixed multisample scheme, the results of previous sections apply if we replace the species tree by one where $m_x$ edges are attached to the leaf formerly labeled $x$ with the new leaves labeled $x_1, x_2,\dots, x_{m_x}$. (This is called the \emph{extended species tree} in \cite{adr2011b}.) While this tree is not binary, one can consider binary perturbations of it, and use continuity to conclude that
the expected USTAR dissimilarity  on $\sum_x m_x$ taxa will exactly fit this tree. If one then defines $D(a,b)$ as the expectation of $D_T(A_1,B_1)$ for each $a,b\in \mathcal X$, or as the expectation of the average of $D_T(A_i,B_j)$ over $1\le i\le m_a$ and $1\le j\le m_b$, the expected dissimilarity on $\mathcal X$ is the same, as the $A_i$ lineages for various $i$ are exchangeable under the coalescent model. Since this expected dissimilarity must exactly fit the unrooted topology relating only the $X_1$ for $x\in \mathcal X$, it thus fits the unrooted topology relating the taxa in $\mathcal X$. Thus either retaining only one sample per taxon, or averaging over the lineages sampled from each taxon will lead to consistent inference. Since data sets have only a finite number of gene trees, by averaging the empirical $D_T(A_i,B_j)$ one would hope to improve one's estimate of the expected value, so we choose to do so. Moreover, one could obtain the same dissimilarity by averaging over samples for each gene tree $T$ individually, creating a USTAR dissimilarity matrix for $\mathcal X$ from one tree at a time, and then averaging over these.

Now suppose we specify a finite number of multisample schemes, as well as probabilities of using each one for any gene. Given a data set of gene trees obtained from such an approach,  as described in the last paragraph one could apply a USTAR method averaged over multiple lineage samples to each subcollection of trees with the same multisample scheme.
But since the dissimilarity for each such subcollection in expectation approaches one fitting the species tree as the number of gene trees increases, then any weighted average of them over the multisample schemes does as well. This is a consequence of the dissimilarity arising from each subcollection satisfying the same four-point condition equality and inequalities, so a convex linear combination
of them does also.  Thus with multisample schemes $(m_{x,s})_{x\in \mathcal X}$ for $1\le s  \le S$, and any non-negative weighting constants $\alpha_s$, if we define an empirical dissimilarity as
\begin{equation}\hat D(a,b)=\sum_{1\le s\le S}\alpha_s \sum_ {T \text{ displaying }\atop(m_{x,s})} \frac 1{m_{a,s}m_{b,s}} \sum_{1\le i\le m_{a,s}\atop {1\le j\le n_{b,s}}} D_{T}(A_i,B_j)\label{eq:multi1}\end{equation}
then  we will have consistent inference provided the number of gene trees for each scheme in the sum all go to infinity.
Choosing $\alpha_s=1/|\mathcal T|$ where $\mathcal T$ is the collection of gene trees yields our suggested formula
\begin{equation}\hat D(a,b)=\frac 1{|\mathcal T|} \sum_ {T\in \mathcal T } \frac 1{m_a(T)m_b(T)} \sum_{1\le i\le m_a(T)\atop {1\le j\le m_b(T)}} D_{T}(A_i,B_j).\label{eq:multi2}\end{equation}

Note that the formula  \eqref{eq:multi1} can not be specialized to give formula \eqref{eq:Liumulti}. Taking $\alpha_s=m_{a,s}m_{b.s}/\sum_s m_{a,s}m_{b,s}$ does make them agree for the single comparison of $a$ and $b$, but will not for other pairs of taxa (unless $m_{x,s}$ is independent of $x$).

The essential difference between the formulas \eqref{eq:multi2} and \eqref{eq:Liumulti} is how the product ${m_a(T) m_b(T)}$ appears in them. In formula
\eqref{eq:Liumulti} all $D_T(A_i,B_j)$ are treated on an equal basis, whether they come from the same locus and are therefore correlated, or from different loci and thus independent trials of the coalescent process. Formula 
\eqref{eq:multi2} can be viewed as first constructing an intertaxon distance matrix for each locus by averaging pairwise distances over choices of alleles, and then 
averaging these over loci, to create a final intertaxon distance matrix.

We emphasize that using the consistency of  formulas \eqref{eq:multi1} and \eqref{eq:multi2} to justify their use in applying USTAR to finite data sets hinges on an assumption that every multisample scheme that appears in a  data set appears many times. Particularly for data sets  assembled from several earlier studies, there may be little commonality in the sampling scheme from one gene to the next. Simulations are needed to explore whether our formulas behave well under such circumstances.

\medskip

Simulations in \cite{ASTRIDWarnow} testing the performance of USTAR methods did not explore multisample schemes at all. However, in that work a new variant of a USTAR method that allows for gene trees missing some taxa was studied --- in the notation above the $m_x(T)$ could be 1 or 0.
Although such USTAR methods were reported to perform well on simulated data under these circumstances, theoretical justification for the particular approach taken has yet to be developed. Moreover, one should be cautious that if the test simulations involve random deletion of taxa from gene trees, they may not be relevant to empirical data sets in which taxa are missing in more patterned ways.

\section*{Acknowledgements}  
This work was begun while EA and JR were Short-term Visitors and JD was a Sabbatical Fellow at the National Institute for
Mathematical and Biological Synthesis, an institute sponsored by the National Science Foundation, the
U.S. Department of Homeland Security, and the U.S. Department of Agriculture through NSF Award
\#EF-0832858, with additional support from the University of Tennessee, Knoxville.
It was further supported by the National Institutes of Health grant R01 GM117590, awarded under the  Joint DMS/NIGMS Initiative to Support Research at the Interface of the Biological and Mathematical Sciences.

\bibliographystyle{IEEEtran}

\bibliography{USTAR}

\begin{thebibliography}{10}
\providecommand{\url}[1]{#1}
\csname url@samestyle\endcsname
\providecommand{\newblock}{\relax}
\providecommand{\bibinfo}[2]{#2}
\providecommand{\BIBentrySTDinterwordspacing}{\spaceskip=0pt\relax}
\providecommand{\BIBentryALTinterwordstretchfactor}{4}
\providecommand{\BIBentryALTinterwordspacing}{\spaceskip=\fontdimen2\font plus
\BIBentryALTinterwordstretchfactor\fontdimen3\font minus
  \fontdimen4\font\relax}
\providecommand{\BIBforeignlanguage}[2]{{%
\expandafter\ifx\csname l@#1\endcsname\relax
\typeout{** WARNING: IEEEtran.bst: No hyphenation pattern has been}%
\typeout{** loaded for the language `#1'. Using the pattern for}%
\typeout{** the default language instead.}%
\else
\language=\csname l@#1\endcsname
\fi
#2}}
\providecommand{\BIBdecl}{\relax}
\BIBdecl

\bibitem{Liu2009}
L.~Liu, L.~Yu, D.~K. Pearl, and S.~V. Edwards, ``Estimating species phylogenies
  using coalescence times among sequences,'' \emph{Syst. Biol.}, vol.~58, pp.
  468--477, 2009.

\bibitem{Liu2011}
L.~Liu and L.~Yu, ``Estimating species trees from unrooted gene trees,''
  \emph{Syst. Biol.}, vol.~60, pp. 661--667, 2011.

\bibitem{ASTRIDWarnow}
P.~Vachaspati and T.~Warnow, ``{ASTRID}: {A}ccurate {S}pecies {TR}ees from
  {I}nternode {D}istances,'' \emph{BMC Genomics}, vol.~16, no. Suppl 10, p.~S3,
  2015.

\bibitem{adr2013}
E.~S. Allman, J.~H. Degnan, and J.~A. Rhodes, ``Species tree inference by the
  {STAR} method, and generalizations,'' \emph{J. Comput. Biol.}, vol.~20,
  no.~1, pp. 50--61, 2013.

\bibitem{BIONJ}
O.~Gascuel, ``{BIONJ}: an improved version of the {NJ} algorithm based on a
  simple model of sequence data,'' \emph{Mol. Biol. Evol.}, vol.~14, no.~7, pp.
  685--695, 1997.

\bibitem{FastME}
V.~Lefort, R.~Desper, and O.~Gascuel, ``{FastME} 2.0: A comprehensive,
  accurate, and fast distance-based phylogeny inference program,'' \emph{Mol.
  Biol. Evol.}, vol.~32, no.~10, pp. 2798--2800, 2015.

\bibitem{adr2011a}
E.~S. Allman, J.~H. Degnan, and J.~A. Rhodes, ``Identifying the rooted species
  tree from the distribution of unrooted gene trees under the coalescent,''
  \emph{J. Math. Biol.}, vol.~62, no.~6, pp. 833--862, 2011.

\bibitem{SempleSteel}
C.~Semple and M.~Steel, \emph{Phylogenetics}, ser. Oxford Lecture Series in
  Mathematics and its Applications.\hskip 1em plus 0.5em minus 0.4em\relax
  Oxford: Oxford University Press, 2003, vol.~24.

\bibitem{bryant2003classification}
D.~Bryant, ``A classification of consensus methods for phylogenetics,''
  \emph{DIMACS series in discrete mathematics and theoretical computer
  science}, vol.~61, pp. 163--184, 2003.

\bibitem{DegnanEtAl09}
J.~H. Degnan, M.~{DeGiorgio}, D.~Bryant, and N.~A. Rosenberg, ``Properties of
  consensus methods for inferring species trees from gene trees,'' \emph{Syst.
  Biol.}, vol.~58, pp. 35--54, 2009.

\bibitem{mirarab2014evaluating}
S.~Mirarab, M.~S. Bayzid, and T.~Warnow, ``Evaluating summary methods for
  multilocus species tree estimation in the presence of incomplete lineage
  sorting,'' \emph{Syst. Biol.}, doi:10.1093/sysbio/syu063 2014.

\bibitem{adr2011b}
E.~S. Allman, J.~H. Degnan, and J.~A. Rhodes, ``Determining species tree
  topologies from clade probabilities under the coalescent,'' \emph{J. Theor.
  Biol.}, vol. 289, pp. 96--106, 2011.

\end{thebibliography}

\end{document}